%
\documentclass[10pt,conference]{IEEEtran}

\usepackage{epsfig}
\usepackage{graphicx}
\usepackage{cite}
\usepackage{amsmath}
\usepackage{amssymb}

\newcommand{\comment}[1]  {}
\def\BE{\begin{equation}}
\def\EE{\end{equation}}
\def\BEA{\begin{eqnarray}}
\def\EEA{\end{eqnarray}}

\DeclareMathOperator*{\argmax}{arg\,max}
\newtheorem{thm}{Theorem}
\newtheorem{lem}[thm]{Lemma}
\newtheorem{alg}{Algorithm}
\newtheorem{prop}[thm]{Proposition}

\newtheorem{corol}[thm]{Corollary}
\newtheorem{remark}[thm]{Remark}

\newcommand\ie{{\textsl{i.e.\,}}}
\newcommand\eg{{\textsl{e.g.\,}}}
\newcommand\etal{{\textsl{et al.\,}}}

\newcommand\vb{{\bf b}}

\newcommand\vx{{\bf x}}
\newcommand\vy{{\bf y}}

\newcommand\mA{{\bf A}} 

\newcommand\mI{{\bf I}}

\newcommand\mR{{\bf R}}

\begin{document}

\title{Gaussian Belief Propagation Solver\\for Systems of Linear Equations}

\author{\authorblockN{Ori Shental$^{1}$, Paul H. Siegel and Jack K. Wolf}
\authorblockA{Center for Magnetic Recording Research\\
University of California - San Diego\\
La Jolla, CA 92093, USA\\
Email: \{oshental,psiegel,jwolf\}@ucsd.edu}
\and
\authorblockN{Danny Bickson$^{1}$ and Danny Dolev}
\authorblockA{School of Computer Science and Engineering\\
Hebrew University of Jerusalem\\
Jerusalem 91904, Israel\\
Email: \{daniel51,dolev\}@cs.huji.ac.il}}

\maketitle

\begin{abstract}
\footnotetext[1]{Contributed equally to this work.\\Supported in part by NSF Grant
No.~CCR-0514859 and EVERGROW, IP 1935 of the EU Sixth Framework.}
The canonical problem of solving a system of linear equations
arises in numerous contexts in information theory, communication
theory, and related fields. In this contribution, we develop a
solution based upon Gaussian belief propagation (GaBP) that does
not involve direct matrix inversion. The iterative nature of our
approach allows for a distributed message-passing implementation
of the solution algorithm. We also address some properties of the
GaBP solver, including convergence, exactness, its max-product
version and relation to classical solution methods. The
application example of decorrelation in CDMA is used to
demonstrate the faster convergence rate of the proposed solver in
comparison to conventional linear-algebraic iterative solution
methods.
\end{abstract}

\section{Problem Formulation and Introduction}
Solving a system of linear equations \mbox{$\mA\vx=\vb$} is one of
the most fundamental problems in algebra, with countless
applications in the mathematical sciences and engineering. Given
the observation vector
\mbox{$\vb\in\mathbb{R}^{n},n\in\mathbb{N}^{\ast}$}, and the data
matrix \mbox{$\mA\in\mathbb{R}^{n\times n}$}, a unique solution,
\mbox{$\vx=\vx^{\ast}\in\mathbb{R}^{n}$}, exists if and only if
the data matrix $\mA$ is full rank. In this contribution we
concentrate on the popular case where the data matrices, $\mA$,
are also symmetric (\eg, as in correlation matrices). \comment{A
possible extension of our approach to the non-symmetric case, or
even rectangular data matrices, is concisely addressed in
Section~\ref{}.}
Thus, assuming a nonsingular symmetric matrix $\mA$, the system of
equations can be solved either directly or in an iterative manner.
Direct matrix inversion methods, such as Gaussian elimination (LU
factorization,~\cite{BibDB:BookMatrix}-Ch. 3) or band Cholesky
factorization (\cite{BibDB:BookMatrix}-Ch. 4), find the solution
with a finite number of operations, typically, for a dense
$n\times n$ matrix, on the order of $n^{3}$. The former is
particularly effective for systems with unstructured dense data
matrices, while the latter is typically used for structured dense
systems.

Iterative methods~\cite{BibDB:BookAxelsson} are inherently
simpler, requiring only additions and multiplications, and have
the further advantage that they can exploit the sparsity of the
matrix $\mA$ to reduce the computational complexity as well as the
algorithmic storage requirements~\cite{BibDB:BookSaad}. By
comparison, for large, sparse and amorphous data matrices, the
direct methods are impractical due to the need for excessive row
reordering operations.
The main drawback of the iterative approaches is that, under
certain conditions, they converge only asymptotically to the exact
solution $\vx^{\ast}$~\cite{BibDB:BookAxelsson}. Thus, there is
the risk that they may converge slowly, or not at all. In
practice, however, it has been found that they often converge to
the exact solution or a good approximation after a relatively
small number of iterations.

A powerful and efficient iterative algorithm, belief propagation
(BP)~\cite{BibDB:BookPearl}, also known as the sum-product
algorithm, has been very successfully used to solve, either
exactly or approximately, inference problems in probabilistic
graphical models~\cite{BibDB:BookJordan}.
In this paper, we reformulate the general problem of solving a
linear system of algebraic equations as a probabilistic inference
problem on a suitably-defined graph. We believe that this is the
first time that an explicit connection between these two
ubiquitous problems has been established. As an important
consequence, we demonstrate that Gaussian BP (GaBP) provides an
efficient, distributed approach to solving a linear system that
circumvents the potentially complex operation of direct matrix
inversion. \comment{Using the seminal work of Weiss and
Freeman~\cite{BibDB:Weiss01Correctness} and related recent
developments~\cite{BibDB:jmw_walksum_nips,BibDB:mjw_walksum_jmlr06},
we address the convergence and exactness properties of the
proposed GaBP solver. The GaBP solver's max-product version and
the relation to classical solution methods are also investigated.}

We shall use the following notations. The operator $\{\cdot\}^{T}$
denotes a vector or matrix transpose, the matrix $\mI_{n}$ is a
$n\times n$ identity matrix, while the symbols $\{\cdot\}_{i}$ and
$\{\cdot\}_{ij}$ denote entries of a vector and matrix,
respectively.

%
%
%

\section{The GaBP Solver}\label{sec_GaBP}

\subsection{From Linear Algebra to Probabilistic Inference}
We begin our derivation by defining an undirected graphical model
(\ie, a Markov random field), $\mathcal{G}$, corresponding to the
linear system of equations. Specifically, let
$\mathcal{G}=(\mathcal{X},\mathcal{E})$, where $\mathcal{X}$ is a
set of nodes that are in one-to-one correspondence with the linear
system's variables $\vx=\{x_{1},\ldots,x_{n}\}^{T}$, and where
$\mathcal{E}$ is a set of undirected edges determined by the
non-zero entries of the (symmetric) matrix $\mA$. Using this
graph, we can translate the problem of solving the linear system
from the algebraic domain to the domain of probabilistic
inference, as stated in the following theorem.

\begin{prop}[Solution and inference]\label{prop_3}
The computation of the solution vector $\vx^{\ast}$ is identical
to the inference of the vector of marginal means
\mbox{$\mathbf{\mu}=\{\mu_{1},\ldots,\mu_{n}\}$} over the graph
$\mathcal{G}$ with the associated joint Gaussian probability
density function
\mbox{$p(\vx)\sim\mathcal{N}(\mu\triangleq\mA^{-1}\vb,\mA^{-1})$}.
\end{prop}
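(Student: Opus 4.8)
The plan is to realize the solution vector $\vx^{\ast}$ as the mean of an explicitly constructed Gaussian density whose precision (inverse-covariance) matrix is exactly $\mA$, so that the conditional-independence structure of the density is encoded by the edge set $\mathcal{E}$ of $\mathcal{G}$. Concretely, I would define the joint density in \emph{information form},
\BE
p(\vx) \propto \exp\!\left(-\tfrac{1}{2}\vx^{T}\mA\vx + \vb^{T}\vx\right),
\EE
and observe that, because the quadratic term couples $x_{i}$ and $x_{j}$ only when $\{\mA\}_{ij}\neq 0$, this density factors into node and pairwise potentials supported precisely on the vertices and edges of $\mathcal{G}$. This identifies $p(\vx)$ as the Markov random field attached to the linear system and makes $\mathcal{G}$ the suitably-defined graph of the reduction.

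Next I would identify the moments of $p(\vx)$ by completing the square in the exponent, writing
\BE
-\tfrac{1}{2}\vx^{T}\mA\vx + \vb^{T}\vx = -\tfrac{1}{2}(\vx-\mA^{-1}\vb)^{T}\mA(\vx-\mA^{-1}\vb) + \tfrac{1}{2}\vb^{T}\mA^{-1}\vb .
\EE
The constant term is absorbed into the normalization, so $p(\vx)\sim\mathcal{N}(\mu\triangleq\mA^{-1}\vb,\mA^{-1})$, with mean vector $\mA^{-1}\vb$ and covariance $\mA^{-1}$. This step tacitly requires $\mA$ to be positive definite so that the exponent is a genuine, normalizable Gaussian; I would state this as the operative hypothesis (it strengthens the nonsingularity already assumed), noting that symmetry guarantees $\mA^{-1}$ is a valid covariance.

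Finally I would invoke the standard fact that every marginal of a jointly Gaussian vector is itself Gaussian and that the marginal mean $\mu_{i}$ of $x_{i}$ equals the $i$-th coordinate of the joint mean. Hence the vector of marginal means $\{\mu_{1},\ldots,\mu_{n}\}$ coincides with $\mu = \mA^{-1}\vb$, which is by definition the unique solution $\vx^{\ast}$ of $\mA\vx=\vb$; equivalently, setting the gradient of the exponent to zero, $-\mA\vx+\vb=0$, recovers $\vx^{\ast}$ as the mode, which for a Gaussian equals the mean. The computation is short, so the substance of the result is conceptual rather than technical: the one point deserving care is the normalizability (positive-definiteness) condition, and the genuinely informative content is the first step's claim that $\mA$ itself serves as the information matrix, so that the graph $\mathcal{G}$ built from the sparsity pattern of $\mA$ is exactly the dependency graph of $p(\vx)$.
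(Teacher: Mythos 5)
Your proposal is correct and follows essentially the same route as the paper: defining $p(\vx)\propto\exp(-\vx^{T}\mA\vx/2+\vb^{T}\vx)$, completing the square to read off $\mathcal{N}(\mA^{-1}\vb,\mA^{-1})$, and identifying the solution with the (marginal) means via the stationary point of the quadratic form. Your explicit flagging of the positive-definiteness requirement for normalizability is a point the paper glosses over in the published proof, but it does not change the argument.
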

\begin{proof}
Another way of solving the set of linear equations
$\mA\vx-\vb=\mathbf{0}$ is to represent it by using a quadratic
form \mbox{$q(\vx)\triangleq\vx^{T}\mA\vx/2-\vb^{T}\vx$}. As the
matrix $\mA$ is symmetric, the derivative of the quadratic form
w.r.t. the vector $\vx$ is given by the vector $\partial
q/\partial\vx=\mA\vx-\vb$. Thus equating $\partial q/\partial
\vx=\mathbf{0}$ gives the stationary point $\vx^{\ast}$, which is
nothing but the desired solution to $\mA\vx=\vb$.
Next, one can define the following joint Gaussian probability
density function \BE\label{eq_G}
p(\vx)\triangleq\mathcal{Z}^{-1}\exp{\big(-q(\vx)\big)}=\mathcal{Z}^{-1}\exp{(-\vx^{T}\mA\vx/2+\vb^{T}\vx)},\EE
where $\mathcal{Z}$ is a distribution normalization factor.
Denoting the vector $\mathbf{\mu}\triangleq\mA^{-1}\vb$, the
Gaussian density function can be rewritten as \BEA\label{eq_G2}
p(\vx)&=&\mathcal{Z}^{-1}\exp{(\mathbf{\mu}^{T}\mA\mathbf{\mu}/2)}\nonumber\\&\times&\exp{(-\vx^{T}\mA\vx/2+\mathbf{\mu}^{T}\mA\vx-\mathbf{\mu}^{T}\mA\mathbf{\mu}/2)}
\nonumber\\&=&\mathcal{\zeta}^{-1}\exp{\big(-(\vx-\mathbf{\mu})^{T}\mA(\vx-\mathbf{\mu})/2\big)}\nonumber\\&=&\mathcal{N}(\mathbf{\mu},\mA^{-1}),\EEA
where the new normalization factor
$\mathcal{\zeta}\triangleq\mathcal{Z}\exp{(-\mathbf{\mu}^{T}\mA\mathbf{\mu}/2)}$.
It follows that the target solution $\vx^{\ast}=\mA^{-1}\vb$ is
equal to $\mathbf{\mu}\triangleq\mA^{-1}\vb$, the mean vector of
the distribution $p(\vx)$, as defined above~(\ref{eq_G}).
Hence, in order to solve the system of linear equations we need to
infer the marginal densities, which must also be Gaussian,
\mbox{$p(x_{i})\sim\mathcal{N}(\mu_{i}=\{\mA^{-1}\vb\}_{i},P_{i}^{-1}=\{\mA^{-1}\}_{ii})$},
where $\mu_{i}$ and $P_{i}$ are the marginal mean and inverse
variance (sometimes called the precision), respectively.
\end{proof}

According to Proposition~\ref{prop_3}, solving a deterministic
vector-matrix linear equation translates to solving an inference
problem in the corresponding graph. The move to the probabilistic
domain calls for the utilization of BP as an efficient inference
engine.

\comment{
\begin{remark}
Defining a jointly Gaussian probability density function
immediately yields an implicit assumption on the positive
semi-definiteness of the precision matrix $\mA$, in addition to
the symmetry assumption. However, we would like to stress that
this assumption emerges only for exposition purposes, so we can
use the notion of `Gaussian probability', but the derivation of
the GaBP solver itself does not use this assumption.\comment{See
the numerical example of the exact GaBP-based solution of a system
with a symmetric, but not positive semi-definite, data matrix
$\mA$ in Section~\ref{sec_nonPSD}.}
\end{remark}}
\subsection{Belief Propagation in Graphical Model}
Belief propagation (BP) is equivalent to applying Pearl's local
message-passing algorithm~\cite{BibDB:BookPearl}, originally
derived for exact inference in trees, to a general graph even if
it contains cycles (loops). BP has been found to have outstanding
empirical success in many applications, \eg, in decoding Turbo
codes and low-density parity-check (LDPC) codes. The excellent
performance of BP in these applications may be attributed to the
sparsity of the graphs, which ensures that cycles in the graph are
long, and inference may be performed as if the graph were a tree.

Given the data matrix $\mA$ and the observation vector $\vb$, one
can write explicitly the Gaussian density function,
$p(\vx)$~(\ref{eq_G2}), and its corresponding graph $\mathcal{G}$
consisting of edge potentials (`compatibility functions')
$\psi_{ij}$ and self potentials (`evidence') $\phi_{i}$. These
graph potentials are simply determined according to the following
pairwise factorization of the Gaussian function~(\ref{eq_G})\BE
p(\vx)\propto\prod_{i=1}^{n}\phi_{i}(x_{i})\prod_{\{i,j\}}\psi_{ij}(x_{i},x_{j}),\EE
resulting in \mbox{$\psi_{ij}(x_{i},x_{j})\triangleq
\exp(-x_{i}A_{ij}x_{j})$} and
\mbox{$\phi_{i}(x_{i})\triangleq\exp\big(b_{i}x_{i}-A_{ii}x_{i}^{2}/2\big)$}.
Note that by completing the square, one can observe that \mbox{$
\phi_{i}(x_{i})\propto\mathcal{N}(\mu_{ii}=b_{i}/A_{ii},P_{ii}^{-1}=A_{ii}^{-1})$}.
The graph topology is specified by the structure of the matrix
$\mA$, \ie, the edges set $\{i,j\}$ includes all non-zero entries
of $\mA$ for which $i>j$.

The BP algorithm functions by passing real-valued messages across
edges in the graph and consists of two computational rules, namely
the `sum-product rule' and the `product rule'. In contrast to
typical applications of BP in coding theory~\cite{BibDB:BookMCT},
our graphical representation resembles a pairwise Markov random
field\cite{BibDB:BookJordan} with a single type of propagating
message, rather than a factor graph~\cite{BibDB:FactorGraph} with
two different types of messages, originating from either the
variable node or the factor node. Furthermore, in most graphical
model representations used in the information theory literature
the graph nodes are assigned discrete values, while in this
contribution we deal with nodes corresponding to continuous
variables. Thus, for a graph $\mathcal{G}$ composed of potentials
$\psi_{ij}$ and $\phi_{i}$ as previously defined, the conventional
sum-product rule becomes an integral-product
rule~\cite{BibDB:Weiss01Correctness} and the message
$m_{ij}(x_j)$, sent from node $i$ to node $j$ over their shared
edge on the graph, is given by \BE\label{eq_contBP}
    m_{ij}(x_j)\propto\int_{x_i} \psi_{ij}(x_i,x_j) \phi_{i}(x_i)
\prod_{k \in \textrm{N}(i)\setminus j} m_{ki}(x_i) dx_{i}. \EE The
marginals are computed (as usual) according to the product rule
\BE\label{eq_productrule} p(x_{i})=\alpha
\phi_{i}(x_{i})\prod_{k\in\textrm{N}(i)}m_{ki}(x_{i}), \EE where
the scalar $\alpha$  is a normalization constant. The set of graph
nodes $\textrm{N}(i)$ denotes the set of all the nodes neighboring
the $i$th node. The set $\textrm{N}(i)\backslash j$ excludes the
node $j$ from $\textrm{N}(i)$.

\subsection{The Gaussian BP Algorithm}\label{sec:GaBP}
Gaussian BP is a special case of continuous BP, where the
underlying distribution is Gaussian. Now, we derive the Gaussian
BP update rules by substituting Gaussian distributions into the
continuous BP update
equations~(\ref{eq_contBP})-(\ref{eq_productrule}).
Before describing the inference algorithm performed over the
graphical model, we make the elementary but very useful
observation that the product of Gaussian densities over a common
variable is, up to a constant factor, also a Gaussian density.

\begin{lem}[Product of Gaussians]\label{Lemma_Gaussian}
Let $f_{1} (x)$ and $f_{2} (x)$ be the probability density
functions of a Gaussian random variable with two possible
densities $\mathcal{N}(\mu_{1},P_{1}^{-1})$ and
$\mathcal{N}(\mu_{2},P_{2}^{-1})$, respectively. Then their
product, \mbox{$f(x)=f_1(x) f_2(x)$} is, up to a constant factor,
the probability density function of a Gaussian random variable
with distribution $\mathcal{N}(\mu,P^{-1})$, where \BEA
P^{-1}&=&(P_{1}+P_{2})^{-1}\label{eq_productprec},\\
\mu&=&P^{-1}(P_{1}\mu_{1}+P_{2}\mu_{2})\label{eq_productmean}.
\EEA
\end{lem}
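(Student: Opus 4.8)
The plan is to establish the result by a direct computation: express each factor in the precision (information-form) parametrization of a scalar Gaussian, multiply the two densities so that their exponents add, and then complete the square in $x$ to read off the precision $P$ and mean $\mu$ of the resulting single Gaussian.

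First I would write each density, up to its normalization constant, as $f_i(x)\propto\exp\!\big(-P_i(x-\mu_i)^2/2\big)$ for $i=1,2$; this is legitimate because the claim asserts equality only up to a constant factor, so the leading constants may be dropped. The product is then $f(x)\propto\exp\!\big(-\tfrac{1}{2}\big[P_1(x-\mu_1)^2+P_2(x-\mu_2)^2\big]\big)$, and the entire task reduces to simplifying the quadratic in the exponent.

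The core step is to expand the bracketed expression and collect it by powers of $x$: inside the bracket the quadratic coefficient is $P_1+P_2$ and the linear coefficient is $-2(P_1\mu_1+P_2\mu_2)$, the remaining terms being independent of $x$. Completing the square then rewrites the exponent as $-\tfrac{1}{2}(P_1+P_2)\big(x-\tfrac{P_1\mu_1+P_2\mu_2}{P_1+P_2}\big)^2$ plus a remainder that does not depend on $x$. Matching this against the canonical form $-P(x-\mu)^2/2$ gives $P=P_1+P_2$, equivalently $P^{-1}=(P_1+P_2)^{-1}$ as in~(\ref{eq_productprec}), and $\mu=(P_1\mu_1+P_2\mu_2)/(P_1+P_2)=P^{-1}(P_1\mu_1+P_2\mu_2)$ as in~(\ref{eq_productmean}).

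Finally I would observe that the $x$-independent remainder from the square, multiplied by the two discarded normalization constants, is precisely the constant factor allowed in the statement; since it contributes nothing to the dependence on $x$, the product $f(x)$ is, up to that constant, the density of $\mathcal{N}(\mu,P^{-1})$ with the stated parameters. There is no genuine obstacle here, as the argument is elementary; the only thing requiring mild care is the coefficient bookkeeping in the completion of the square together with the verification that every factor independent of $x$ can indeed be swept into the normalization, which is exactly what justifies the ``up to a constant factor'' qualification.
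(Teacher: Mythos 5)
Your proof is correct, and the completing-the-square computation you carry out is exactly the standard argument: the quadratic coefficient gives $P = P_1 + P_2$ and the linear coefficient gives $\mu = P^{-1}(P_1\mu_1 + P_2\mu_2)$, with all $x$-independent terms absorbed into the constant factor. The paper itself omits the proof of this lemma as ``straightforward,'' so your write-up simply supplies the elementary verification the authors had in mind.
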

\begin{proof}
The proof of this lemma is straightforward, thus omitted.
\end{proof}

\begin{figure}[thb!]
\begin{center}
    \includegraphics[width=0.2\textwidth]{./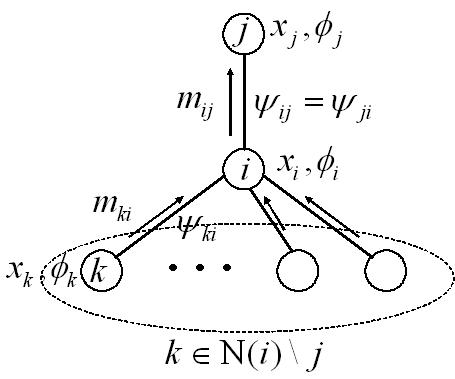}
   \vspace{-0.4cm}\caption{Graphical model: The neighborhood of node $i$.}
\end{center}\label{fig_flow}
\end{figure}
Fig. 1.\comment{~\ref{fig_flow}} plots a portion of a certain
graph, describing the neighborhood of node $i$. Each node (empty
circle) is associated with a variable and self potential $\phi$,
which is a function of this variable, while edges are identified
with the pairwise (symmetric) potentials $\psi$. Messages
propagate along the edges in both directions.
The messages relevant for the computation of message $m_{ij}$ are shown in Fig.~1.\comment{~\ref{fig_flow}}).
Looking at the right hand side of the integral-product
rule~(\ref{eq_contBP}), node $i$ needs to first calculate the
product of all incoming messages, except for the message coming
from node $j$. Recall that since $p(\vx)$ is jointly Gaussian, the
factorized self potentials
$\phi_{i}(x_i)\propto\mathcal{N}(\mu_{ii},P_{ii}^{-1})$ and
similarly all messages
$m_{ki}(x_i)\propto\mathcal{N}(\mu_{ki},P_{ki}^{-1})$ are of
Gaussian form as well.

As the terms in the product of the incoming messages and the self
potential in the integral-product rule~(\ref{eq_contBP}) are all a
function of the same variable, $x_{i}$ (associated with the node
$i$), then, according to the multivariate extension of
Lemma~\ref{Lemma_Gaussian}, \mbox{$\phi_{i}(x_i) \prod_{k \in
\textrm{N}(i) \backslash j} m_{ki}(x_i)$} is proportional to a
certain Gaussian distribution, $\mathcal{N}(\mu_{i\backslash
j},P_{i\backslash j}^{-1})$. Applying the multivariate version of
the product precision expression in~(\ref{eq_productprec}), the
update rule for the inverse variance is given by (over-braces
denote the origin of each of the terms) \BE\label{eq_prec}
P_{i\backslash j} = \overbrace{P_{ii}}^{\phi_{i}(x_i)} +
\sum_{k\in \textrm{N}(i) \backslash j}
\overbrace{P_{ki}}^{m_{ki}(x_i)}, \EE where $P_{ii}\triangleq
A_{ii}$ is the inverse variance a-priori associated with node $i$,
via the precision of $\phi_{i}(x_{i})$, and $P_{ki}$ are the
inverse variances of the messages $m_{ki}(x_i)$. Similarly,
using~(\ref{eq_productmean}) for the multivariate case, we can
calculate the mean \BE\label{eq_mean}
 \mu_{i\backslash j} = P_{i\backslash j}^{-1}\Big(\overbrace{P_{ii}\mu_{ii}}^{\phi_{i}(x_i)} +
\sum_{{k} \in \textrm{N}(i) \backslash j}
\overbrace{P_{ki}\mu_{ki}}^{m_{ki}(x_i)}\Big), \EE where
$\mu_{ii}\triangleq b_{i}/A_{ii}$ is the mean of the self
potential and $\mu_{ki}$ are the means of the incoming messages.

Next, we calculate the remaining terms of the message
$m_{ij}(x_j)$, including the integration over $x_{i}$. After some
algebraic manipulation, using the Gaussian integral \mbox{$
\int_{-\infty}^{\infty}\exp{(-ax^{2}+bx)}dx=\sqrt{\pi/a}\exp{(b^{2}/4a)}$},
we find that the messages $m_{ij}(x_j)$ are proportional to a
normal distribution with precision and mean
\BEA\label{eq_prec_message}
P_{ij} = -A_{ij}^2P_{i \backslash j}^{-1},\\
\mu_{ij}\label{eq_mean_message} =
-P_{ij}^{-1}A_{ij}\mu_{i\backslash j}. \EEA These two scalars
represent the messages propagated in the GaBP-based algorithm.

Finally, computing the product rule~(\ref{eq_productrule}) is
similar to the calculation of the previous product and the
resulting mean~(\ref{eq_mean}) and precision~(\ref{eq_prec}), but
including all incoming messages. The marginals are inferred by
normalizing the result of this product. Thus, the marginals are
found to be Gaussian probability density functions
$\mathcal{N}(\mu_{i},P_{i}^{-1})$ with precision and mean \BEA
P_{i}
= \overbrace{P_{ii}}^{\phi_{i}(x_i)} + \sum_{k\in \textrm{N}(i)} \overbrace{P_{ki}}^{m_{ki}(x_i)},\\
\mu_{i} = P_{i\backslash
j}^{-1}\Big(\overbrace{P_{ii}\mu_{ii}}^{\phi_{i}(x_i)} + \sum_{{k}
\in \textrm{N}(i)}
\overbrace{P_{ki}\mu_{ki}}^{m_{ki}(x_i)}\Big)\label{eq_marginal_mean},
\EEA respectively.

For a dense data matrix, the number of messages passed on the
graph can be reduced from $\mathcal{O}(n^{2})$ (\ie, twice the
number of edges) down to $\mathcal{O}(n)$ messages per iteration
round by using a similar construction to
Bickson~\etal~\cite{BroadcastBP}: Instead of sending a unique
message composed of the pair of $\mu_{ij}$ and $P_{ij}$ from node
$i$ to node $j$, a node broadcasts aggregated sums to all its
neighbors, and consequently each node can retrieve locally
$P_{i\backslash j}$~(\ref{eq_prec}) and $\mu_{i\backslash
j}$~(\ref{eq_mean}) from the aggregated sums \BEA
\tilde{P}_{i}&=&P_{ii}+\sum_{{k}\in\textrm{N}(i)}
P_{ki},\\\tilde{\mu}_{i}&=&\tilde{P}_{i}^{-1}(P_{ii}\mu_{ii}+\sum_{k
\in \textrm{N}(i)} P_{ki}\mu_{ki}) \EEA by means of a subtraction
\BEA P_{i\backslash j}&=&\tilde{P}_{i}-P_{ji},\\\mu_{i\backslash
j}&=&\tilde{\mu}_{i}-P_{i \backslash j}^{-1}P_{ji}\mu_{ji}.\EEA
The following pseudo-code summarizes the GaBP solver
algorithm.\comment{The derivation of the GaBP solver algorithm, as
described in Section~\ref{sec_algorithm}, is concluded by simply
substituting the explicit expressions for $P_{i\backslash
j}$~(\ref{eq_prec}) into $P_{ij}$~(\ref{eq_prec_message}),
$\mu_{i\backslash j}$~(\ref{eq_mean}) and
$P_{ij}$~(\ref{eq_prec_message}) into
$\mu_{ij}$~(\ref{eq_mean_message}), and $P_{i\backslash
j}$~(\ref{eq_prec}) into $\mu_{i}$~(\ref{eq_marginal_mean}).}

\begin{alg}[GaBP
solver]\label{alg_GaBP_Broadcast}\end{alg}\vspace{-0.5cm}
\begin{center}\resizebox{0.5\textwidth}{!}{\begin{tabular}{|lll|}
  \hline&&\\
  \texttt{1.} & \emph{\texttt{Initialize:}} & $\checkmark$\quad\texttt{Set the neighborhood} $\textrm{N}(i)$ \texttt{to include}\\&&\quad\quad$\forall k\neq i \texttt{ such that } A_{ki}\neq0$.\\&& $\checkmark$\quad\texttt{Fix the scalars}\\&&$\quad\quad P_{ii}=A_{ii}$ \texttt{and} $\mu_{ii}=b_{i}/A_{ii}$, $\forall i$.\\
  && $\checkmark$\quad\texttt{Set the initial $i\rightarrow\textrm{N}(i)$ broadcast messages}\\&&\quad\quad $\tilde{P_{i}}=0$ \texttt{and} $\tilde{\mu}_{i}=0$.\\&&$\checkmark$\quad\texttt{Set the initial $k\rightarrow i,k\in\textrm{N}(i)$ internal scalars}\\&&\quad\quad $P_{ki}=0$ \texttt{and} $\mu_{ki}=0$.\\&& $\checkmark$\quad \texttt{Set a convergence threshold} $\epsilon$.\\
  {\texttt{2.}} & {\emph{\texttt{Iterate:}} }&  $\checkmark$\quad\texttt{Broadcast the aggregated sum messages}\\&&$\quad\quad \tilde{P}_{i}=P_{ii}+\sum_{{k}\in\textrm{N}(i)}
P_{ki}$,\\&&
$\quad\quad\tilde{\mu}_{i}=\tilde{P_{i}}^{-1}(P_{ii}\mu_{ii}+\sum_{k
\in \textrm{N}(i)} P_{ki}\mu_{ki})$, $\forall
i$\\&&\quad\quad\texttt{(under chosen scheduling)}.\\&&
  $\checkmark$\quad\texttt{Compute the} $i\rightarrow j,i\in\textrm{N}(j)$ \texttt{internal scalars} \\&& $\quad\quad P_{ij} = -A_{ij}^{2}/(\tilde{P}_{i}-P_{ji})$,\\
  &&$\quad\quad\mu_{ij}=(\tilde{P_{i}}\tilde{\mu_{i}}-P_{ji}\mu_{ji})/A_{ij}$.\\
  {\texttt{3.}} & {\emph{\texttt{Check:}}} & $\checkmark$\quad\texttt{If the internal scalars} $P_{ij}$ \texttt{and} $\mu_{ij}$ \texttt{did not}\\&&\quad\quad\texttt{converge (w.r.t. $\epsilon$),} \texttt{return to
    Step 2.}\\&&$\checkmark$\quad\texttt{Else, continue to Step 4.}\\
  {\texttt{4.}} & {\emph{\texttt{Infer:}}} & $\checkmark$\quad\texttt{Compute the marginal means}\\&&{\quad\quad$\mu_{i}=\big(P_{ii}\mu_{ii}+\sum_{k \in
\textrm{N}(i)}P_{ki}\mu_{ki}\big)/\big(P_{ii}+\sum_{{k}\in\textrm{N}(i)}
P_{ki}\big)=\tilde{\mu}_{i}$, $\forall i$.}\\&&$(\checkmark$\quad\texttt{Optionally compute the marginal precisions}\\&&\quad\quad$P_{i}=P_{ii}+\sum_{k\in\textrm{N}(i)}P_{ki}=\tilde{P}_{i}\quad)$\\
  {\texttt{5.}} & {\emph{\texttt{Solve:}}} & $\checkmark$\quad\texttt{Find the solution}\\&& {$\quad\quad x_{i}^{\ast}=\mu_{i}$, $\forall i$.}\\&&\\\hline
\end{tabular}}\end{center}

\subsection{Max-Product Rule}
A well-known alternative to the sum-product BP algorithm is the
max-product (a.k.a. min-sum) algorithm~\cite{BibDB:BookJordan}. In
this variant of BP, a maximization operation is performed rather
than marginalization, \ie, variables are eliminated by taking
maxima instead of sums. For trellis trees (\eg, graphical
representation of convolutional codes or ISI channels), the
conventional sum-product BP algorithm boils down to performing the
BCJR algorithm, resulting in the most probable symbol, while its
max-product counterpart is equivalent to the Viterbi algorithm,
thus inferring the most probable sequence of
symbols~\cite{BibDB:FactorGraph}.

In order to derive the max-product version of the proposed GaBP
solver, the integral(sum)-product rule~(\ref{eq_contBP}) is
replaced by a new rule\BE\label{eq_contBP2}
    m_{ij}(x_j)\propto\argmax_{x_i} \psi_{ij}(x_i,x_j) \phi_{i}(x_i)
\prod_{k \in \textrm{N}(i)\setminus j} m_{ki}(x_i). \EE Computing
$m_{ij}(x_j)$ according to this max-product rule, one gets (the
exact derivation is omitted) \BE
m_{ij}(x_j)\propto\mathcal{N}(\mu_{ij}=-P_{ij}^{-1}A_{ij}\mu_{i\backslash
j},P_{ij}^{-1} = -A_{ij}^{-2}P_{i \backslash j}), \EE which is
identical to the messages derived for the sum-product
case~(\ref{eq_prec_message})-(\ref{eq_mean_message}). Thus
interestingly, as opposed to ordinary (discrete) BP, the following
property of the GaBP solver emerges.
\begin{corol}[Max-product]
The max-product~(\ref{eq_contBP2}) and
sum-product~(\ref{eq_contBP}) versions of the GaBP solver are
identical.
\end{corol}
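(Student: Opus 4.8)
The plan is to compute the max-product message (\ref{eq_contBP2}) explicitly and show it coincides with the sum-product message (\ref{eq_prec_message})--(\ref{eq_mean_message}). The decisive observation is that for a scalar Gaussian, \emph{integrating out} a variable and \emph{maximizing over} it produce the same functional dependence on the remaining variable, differing only by a multiplicative constant that does not involve that remaining variable and is therefore absorbed into the proportionality $\propto$. Since the two BP variants differ only in replacing the integration of (\ref{eq_contBP}) by the maximization of (\ref{eq_contBP2}), establishing equality of the per-edge messages suffices to conclude that the algorithms are identical.

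First I would reuse, verbatim, the part of the sum-product derivation that precedes the integration. By the multivariate form of Lemma~\ref{Lemma_Gaussian}, the product $\phi_{i}(x_i)\prod_{k\in\textrm{N}(i)\backslash j}m_{ki}(x_i)$ is proportional to $\mathcal{N}(\mu_{i\backslash j},P_{i\backslash j}^{-1})$, with $P_{i\backslash j}$ and $\mu_{i\backslash j}$ given by (\ref{eq_prec}) and (\ref{eq_mean}); this step is common to both rules and invokes neither integration nor maximization. Multiplying by $\psi_{ij}(x_i,x_j)=\exp(-x_iA_{ij}x_j)$ yields a maximand/integrand whose exponent, viewed as a function of $x_i$ with $x_j$ held fixed as a parameter, is a quadratic $-ax_i^2+b(x_j)x_i+c$ with $a=P_{i\backslash j}/2$ a constant, $c$ independent of $x_j$, and $b(x_j)$ affine in $x_j$, the affine part arising from the bilinear term $-A_{ij}x_j$.

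Then I would carry out the two reductions side by side. For the sum-product rule, the Gaussian integral $\int_{-\infty}^{\infty}\exp(-ax^2+bx)\,dx=\sqrt{\pi/a}\,\exp(b^2/4a)$ gives a message $\propto\exp\!\big(b(x_j)^2/4a\big)$. For the max-product rule, the maximum over $x_i$ is attained at $x_i^{\star}=b(x_j)/2a$ and equals $\exp\!\big(b(x_j)^2/4a+c\big)$, so the message is again $\propto\exp\!\big(b(x_j)^2/4a\big)$. Because $b(x_j)$ is affine in $x_j$, the exponent $b(x_j)^2/4a$ is the same quadratic in $x_j$ in both cases, whence both messages are Gaussian in $x_j$ with identical mean and precision, namely $P_{ij}=-A_{ij}^2P_{i\backslash j}^{-1}$ and $\mu_{ij}=-P_{ij}^{-1}A_{ij}\mu_{i\backslash j}$. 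Since the marginal/product step (\ref{eq_productrule}) depends on the incoming messages alone, it is likewise unchanged, and the two algorithms coincide.

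The main obstacle, such as it is, lies in accounting honestly for the one place where the two computations genuinely disagree: integration contributes the factor $\sqrt{\pi/a}$ whereas maximization does not. I would make explicit that $a=P_{i\backslash j}/2$ is determined solely by the self-potential and the incoming messages from $\textrm{N}(i)\backslash j$, all independent of $x_j$, so this factor is a constant in $x_j$ and is legitimately swallowed by $\propto$; the same remark applies to the constant $c$ in the max case. Once this is verified, the remaining algebra is routine, and the identity of the max-product and sum-product GaBP solvers follows.
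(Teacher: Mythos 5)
Your proof is correct and takes the same route as the paper: both compute the max-product message directly and observe that it yields a Gaussian with precision $P_{ij}=-A_{ij}^{2}P_{i\backslash j}^{-1}$ and mean $\mu_{ij}=-P_{ij}^{-1}A_{ij}\mu_{i\backslash j}$, identical to the sum-product messages (\ref{eq_prec_message})--(\ref{eq_mean_message}). The paper explicitly omits this derivation (``the exact derivation is omitted''), and your observation that integrating and maximizing a quadratic exponent $-ax_i^2+b(x_j)x_i+c$ both produce $\exp\big(b(x_j)^2/4a\big)$ up to factors constant in $x_j$ supplies precisely the missing computation.
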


\section{Convergence and Exactness}
\label{sec:converge-exact} In ordinary BP, convergence does not
guarantee exactness of the inferred probabilities, unless the
graph has no cycles.   Luckily, this is not the case for the GaBP
solver. Its underlying Gaussian nature yields a direct connection
between convergence and exact inference. Moreover, in contrast to
BP, the convergence of GaBP is not limited to acyclic or sparse
graphs and can occur even for dense (fully-connected) graphs,
adhering to certain rules that we now discuss.
We can use results from the literature on probabilistic inference
in graphical
models~\cite{BibDB:Weiss01Correctness,BibDB:jmw_walksum_nips,BibDB:mjw_walksum_jmlr06}
to determine the convergence and exactness properties of the GaBP
solver. The following two theorems establish sufficient conditions
under which GaBP is guaranteed to converge to the exact marginal
means.

\begin{thm}{\cite[Claim 4]{BibDB:Weiss01Correctness}}
If the matrix $\mA$ is strictly diagonally dominant (\ie,
$|A_{ii}|>\sum_{j\neq
  i}|A_{ij}|,\forall i$), then GaBP
converges and the marginal means converge to the true means.
\end{thm}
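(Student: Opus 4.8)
The plan is to decouple the two families of messages. Inspecting the update rules, the precision messages obey the closed recursion $P_{ij}^{(t+1)} = -A_{ij}^{2}\big/\big(A_{ii}+\sum_{k\in\textrm{N}(i)\backslash j}P_{ki}^{(t)}\big)$, which does not reference any mean; only once the precisions are understood do the means enter, through the affine recursion obtained by substituting (\ref{eq_prec_message}) into (\ref{eq_mean_message}). So I would first prove convergence of the $P_{ij}$, and then treat the $\mu_{ij}$ as a linear system driven by the now-fixed precisions.

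The cleanest route to both convergence \emph{and} exactness is the unwrapped (computation) tree. For a node $i$, let $T_i^{(t)}$ be the depth-$t$ tree obtained by unrolling $\mathcal{G}$ from $i$, each node carrying the self potential of its original and each edge the potential of its original. The governing lemma is that, because BP is exact on cycle-free graphs, the GaBP precision and mean produced at node $i$ after $t$ iterations coincide \emph{exactly} with the marginal precision and mean of the root of $T_i^{(t)}$ under the Gaussian it defines; hence it suffices to analyze these finite tree marginals as $t\to\infty$. The first thing to check is that each tree distribution is proper: its information matrix $\hat{\mA}^{(t)}$ is built by copying entries of $\mA$, so every row keeps the diagonal $A_{kk}$ of its original while its off-diagonal absolute row sum is at most $\sum_{j\neq k}|A_{kj}|$. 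Strict diagonal dominance is therefore inherited, and a symmetric SDD matrix with positive diagonal is positive definite by Gershgorin, so the root marginal (mean and variance) is well-defined and finite.

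It remains to show the root marginal of $T_i^{(t)}$ converges to the true marginal $\big(\mu_i,P_i^{-1}\big)=\big(\{\mA^{-1}\vb\}_i,\{\mA^{-1}\}_{ii}\big)$ of Proposition~\ref{prop_3}. The mechanism is that the influence of the boundary leaves of $T_i^{(t)}$ on the root decays geometrically in the tree depth, because the entries of the inverse of an SDD matrix decay geometrically with graph distance: there is a $\rho<1$, determined by the dominance gap $\min_i\big(|A_{ii}|-\sum_{j\neq i}|A_{ij}|\big)$, with $|\{(\hat{\mA}^{(t)})^{-1}\}_{\mathrm{root},v}|\le C\rho^{\,d(\mathrm{root},v)}$. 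The difference between the tree root marginal and the true marginal is a sum of such contributions over the leaves, which sit at distance $t$ from the root, so it is bounded by a boundary term that tends to $0$ as $t\to\infty$. This forces $P_{ij}^{(t)}$ and $\mu_{ij}^{(t)}$ to converge and identifies their limits with the exact quantities.

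I expect the main obstacle to be precisely this last estimate: the number of leaves of $T_i^{(t)}$ can grow exponentially in $t$, so the geometric decay of each leaf's contribution must be shown to dominate the branching, so that the \emph{aggregate} boundary influence still vanishes. Making this quantitative is exactly where the strict (as opposed to weak) diagonal dominance is needed, since it supplies a uniform gap and hence a decay rate $\rho$ strong enough to beat the tree's growth; the remaining work---verifying that the precision recursion stays bounded away from the singularity $A_{ii}+\sum_{k}P_{ki}=0$ and that the induced mean recursion is a genuine contraction---follows from the same dominance bound.
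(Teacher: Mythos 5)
The first thing to note is that the paper itself contains no proof of this theorem: it is quoted, with attribution, from Weiss and Freeman \cite[Claim 4]{BibDB:Weiss01Correctness}, so your proposal can only be compared against that cited argument. Your outline is, in essence, exactly that proof: the decoupling of the precision recursion (which is autonomous) from the mean recursion (affine, driven by the precisions), the identification of the iteration-$t$ GaBP quantities at node $i$ with the exact root marginals of the depth-$t$ computation tree, and the observation that the computation tree inherits strict diagonal dominance (every replica keeps the diagonal $A_{kk}$ of its original, while its off-diagonal absolute row sum is at most its original value, so Gershgorin gives positive definiteness) are precisely the ingredients of the cited argument.

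The one place where your sketch falls short of a proof is the final estimate, and you flagged it yourself. The assertion that ``the difference between the tree root marginal and the true marginal is a sum of contributions over the leaves'' is not literally meaningful as written---the two marginals belong to Gaussians of different dimensions---and a per-leaf bound $C\rho^{d}$ cannot by itself beat the exponentially growing number of leaves. The standard repair is the walk-sum formulation (Malioutov \emph{et al.}~\cite{BibDB:mjw_walksum_jmlr06}): write $\mA=\mD(\mI_{n}-\mR)$ with $\mD=\mathrm{diag}(A_{11},\ldots,A_{nn})$, so that strict diagonal dominance gives $\|\mR\|_{\infty}\leq\gamma<1$ and hence the absolutely convergent expansion $\mA^{-1}\vb=\sum_{k\geq0}\mR^{k}\mD^{-1}\vb$. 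The root mean of the depth-$t$ computation tree is the walk-sum over exactly those walks of the original graph ending at $i$ that lift into the tree, a set containing all walks of length at most $t$; the true mean $\{\mA^{-1}\vb\}_{i}$ is the sum over all walks; and the discrepancy is therefore bounded by the aggregate weight of walks of length exceeding $t$, namely by a constant times $\sum_{k>t}\gamma^{k}$. This single row-sum bound aggregates over all leaves at once---it is the norm $\|\mR\|_{\infty}<1$, not a per-leaf decay rate, that beats the branching. With that substitution your plan closes, and it simultaneously yields the boundedness of the precision recursion away from singularity that you deferred to the end.
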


This sufficient condition was recently relaxed to include a wider
group of matrices.

\begin{thm}{\cite[Proposition 2]{BibDB:jmw_walksum_nips}}
If the spectral radius (\ie, the maximum of the absolute values of
the eigenvalues) $\rho$ of the matrix $|\mI_{n}-\mA|$ satisfies
$\rho(|\mI_{n}-\mA|)<1$, then GaBP converges and the marginal
means converge to the true means.
\end{thm}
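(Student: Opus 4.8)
The plan is to prove this following the walk-sum framework of~\cite{BibDB:jmw_walksum_nips,BibDB:mjw_walksum_jmlr06}. First I would reduce to the unit-diagonal case: a diagonal rescaling $\mA\mapsto\mD^{-1/2}\mA\mD^{-1/2}$, $\vb\mapsto\mD^{-1/2}\vb$ with $\mD=\mathrm{diag}(\mA)$ rescales the solution by the known factor $\mD^{1/2}$ and leaves the GaBP recursion's fixed-point structure intact, so we may assume $A_{ii}=1$ for all $i$. Then $\mR\triangleq\mI_{n}-\mA$ has zero diagonal and $|\mI_{n}-\mA|=|\mR|$. Since the spectral radius of any matrix is dominated by that of its entrywise absolute value, the hypothesis gives $\rho(\mR)\le\rho(|\mR|)<1$, so the Neumann series $\mA^{-1}=(\mI_{n}-\mR)^{-1}=\sum_{\ell\ge0}\mR^{\ell}$ converges; in particular $\mA$ is nonsingular and the target mean $\mu=\mA^{-1}\vb$ is well defined.

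Second, I would set up the walk-sum interpretation. Reading $(\mR^{\ell})_{ij}$ as a sum over length-$\ell$ walks $w:\,j\to i$ in $\mathcal{G}$, each weighted by $\phi(w)=\prod_{(k,l)\in w}R_{kl}$, the components of the mean become sums over all walks, $\mu_{i}=\sum_{j}\sum_{w:\,j\to i}\phi(w)\,b_{j}$. The hypothesis $\rho(|\mR|)<1$ is precisely the \emph{walk-summability} condition $\sum_{w}|\phi(w)|<\infty$, which follows because the entrywise series $\sum_{\ell}(|\mR|^{\ell})_{ij}$ converges. This is the crucial leverage: absolute convergence lets the walk-sums be reordered and regrouped arbitrarily, by a Fubini/dominated-convergence argument for series.

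Third, I would identify the GaBP iterates with partial walk-sums on the computation tree. Unrolling the message recursion~(\ref{eq_prec})--(\ref{eq_mean_message}) to depth $t$ rooted at node $i$ yields a finite tree $T_{i}^{(t)}$, and the standard computation-tree argument shows the GaBP mean estimate after $t$ iterations equals the walk-sum, restricted to walks living in $T_{i}^{(t)}$, that terminate at the root. I would then verify (i) every such computation-tree walk maps to a distinct walk ending at $i$ in $\mathcal{G}$, so no graph walk is over-counted, and (ii) every walk ending at $i$ in $\mathcal{G}$ is realized in $T_{i}^{(t)}$ for $t$ large enough. Granting these, the GaBP mean estimate is a partial sum of an absolutely convergent reordering of $\mu_{i}$; that successive iterates differ only by walk-sums over newly added tree depth (bounded by the tail $\sum_{\ell>t}(|\mR|^{\ell})_{ji}$, which vanishes) makes the iterates Cauchy, giving both convergence of GaBP and convergence of the means to $\mu_{i}$.

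The main obstacle is the combinatorial bookkeeping of the third step: matching the walk classes captured recursively by the computation tree against the walks of $\mathcal{G}$ without omission or duplication, and bounding the residual uniformly by the vanishing tail of $\sum_{\ell}|\mR|^{\ell}$. A secondary subtlety worth flagging---though not needed here---is that the GaBP variance/precision iterates correspond to self-return (backtracking) walks on the computation tree and therefore need not reproduce the true variances on a loopy graph; the theorem's claim is deliberately confined to the means, which is exactly what the walk-sum accounting delivers.
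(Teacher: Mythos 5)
The paper itself offers no proof of this theorem---it is imported verbatim from the walk-sum analysis of \cite[Proposition 2]{BibDB:jmw_walksum_nips}---and your sketch reconstructs exactly that source's argument: $\rho(|\mI_{n}-\mA|)<1$ as the walk-summability condition, the GaBP mean iterates as partial walk-sums over the computation tree, absolute convergence licensing the reordering, and the correct caveat that the accounting certifies only the means (the precision iterates capture backtracking walks and need not be exact). The one step you assert without justification is the WLOG reduction to $A_{ii}=1$: the hypothesis $\rho(|\mI_{n}-\mA|)<1$ must be shown to survive the rescaling $\mA\mapsto\mD^{-1/2}\mA\mD^{-1/2}$ (it does, by a short nonnegative-matrix argument writing $\mI-|\mI-\mA|$ as an M-matrix, and the cited result is in any case stated for unit-diagonal $\mA$), so modulo that lemma your outline coincides with the cited proof.
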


There are many examples of linear systems that violate these
conditions for which the GaBP solver nevertheless converges to the
exact solution. In particular, if the graph corresponding to the
system is acyclic (\ie, a tree), GaBP yields the exact marginal
means (and even marginal variances), regardless of the value of
the spectral radius~\cite{BibDB:Weiss01Correctness}.
\comment{However, in contrast to conventional iterative methods
derived from linear algebra, understanding the conditions for
exact convergence and quantifying the convergence rate of the GaBP
solver remain intriguing open problems.}

\section{Relation to Classical Solution Methods}
It can be shown (see also Plarre and
Kumar~\cite{BibDB:PlarreKumar}) that the GaBP solver
(Algorithm~\ref{alg_GaBP_Broadcast}) for a system of linear
equations represented by a tree graph is identical to the renowned
direct method of Gaussian elimination (a.k.a. LU
factorization,~\cite{BibDB:BookMatrix}). The interesting relation
to classical iterative solution methods~\cite{BibDB:BookAxelsson}
is revealed via the following proposition.
\begin{prop}[Jacobi and GaBP solvers]\label{prop_GaBPJ} \mbox{The GaBP solver} (Algorithm~\ref{alg_GaBP_Broadcast})
\begin{enumerate}
  \item with inverse variance messages arbitrarily set to zero, \ie, $P_{ij}=0, i\in\textrm{N}(j),\forall{j}$;
  \item incorporating the message received from node $j$ when computing the message to be sent from node $i$ to node $j$, \ie, replacing $k\in\textrm{N}(i)\backslash j$ with
  $k\in\textrm{N}(i)$;
\end{enumerate}
is identical to the Jacobi iterative method.
\end{prop}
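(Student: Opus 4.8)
The plan is to reduce the GaBP mean recursion, after the two stipulated modifications, to the classical Jacobi update $x_i \leftarrow A_{ii}^{-1}\big(b_i-\sum_{j\neq i}A_{ij}x_j\big)$, and then identify the two iterations term by term. The guiding observation is that the mean computations~(\ref{eq_mean}) and~(\ref{eq_marginal_mean}) never reference a message mean $\mu_{ki}$ on its own; they only ever use the product $P_{ki}\mu_{ki}$. I would therefore begin by computing this product directly from the message rules~(\ref{eq_prec_message})--(\ref{eq_mean_message}): substituting $\mu_{ij}=-P_{ij}^{-1}A_{ij}\mu_{i\backslash j}$ into $P_{ij}\mu_{ij}$ cancels the precision entirely and yields the clean, precision-free identity \BE P_{ij}\mu_{ij}=-A_{ij}\mu_{i\backslash j}. \EE This is the step I expect to be the crux, and it is also where the only real subtlety lies (see below).

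Next I would impose modification~(1). Setting every inverse-variance message to zero in the precision update~(\ref{eq_prec}) collapses it to $P_{i\backslash j}=P_{ii}=A_{ii}$, and likewise the marginal precision to $P_i=A_{ii}$; the normalization factor appearing in every mean update is thus simply $A_{ii}^{-1}$. Feeding this together with $P_{ii}\mu_{ii}=b_i$ and the identity above into the mean update~(\ref{eq_mean}) gives \BE \mu_{i\backslash j}=A_{ii}^{-1}\Big(b_i-\sum_{k\in\textrm{N}(i)\backslash j}A_{ki}\mu_{k\backslash i}\Big). \EE

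Then I would apply modification~(2), which replaces each excluded-neighbor set $\textrm{N}(i)\backslash j$ by the full neighbor set $\textrm{N}(i)$. This makes the mean message independent of its destination $j$, so I may write $\mu_i$ for the common value and $\mu_k$ for the inner terms, obtaining $\mu_i=A_{ii}^{-1}\big(b_i-\sum_{k\in\textrm{N}(i)}A_{ki}\mu_k\big)$. Using the symmetry $A_{ki}=A_{ik}$ and the fact that $A_{ik}=0$ whenever $k\notin\textrm{N}(i)$, the sum extends to all $k\neq i$, and reading $\mu_k$ as the value from the previous message-passing round and $\mu_i$ as the updated value reproduces the Jacobi iteration exactly; a final line would match the iteration counters and the read-out $x_i^{\ast}=\mu_i$ of Step~5 to conclude identity of the two algorithms.

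The main obstacle is purely a matter of well-posedness: taken literally, modification~(1) drives $P_{ij}\to 0$, which makes the message mean $\mu_{ij}=-P_{ij}^{-1}A_{ij}\mu_{i\backslash j}$ individually ill-defined. The way around it, and the reason the argument goes through cleanly, is precisely the first step: the algorithm consumes only the products $P_{ki}\mu_{ki}$, and these remain finite and equal to $-A_{ki}\mu_{k\backslash i}$ regardless of the message precision. Once one works in this information-form bookkeeping, modification~(1) acts only on the normalizations and modification~(2) only on the summation ranges, and no singular quantity is ever evaluated.
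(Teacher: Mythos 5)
Your proposal is correct and follows essentially the same route as the paper's own proof: both rest on the precision-free identity $P_{ij}\mu_{ij}=-A_{ij}\mu_{i\backslash j}$, the collapse $P_{i\backslash j}=P_{ii}=A_{ii}$ under modification (1), the observation that modification (2) makes the message mean destination-independent so that $\mu_{i\backslash j}=\mu_i$, and the final extension of the sum from $\textrm{N}(i)$ to all $k\neq i$ to recover the element-wise Jacobi update. Your explicit discussion of why working with the products $P_{ki}\mu_{ki}$ avoids the singular $\mu_{ij}$ is the same point the paper makes more tersely when it notes that the inverse relation~(\ref{eq_prec_message}) between $P_{ij}$ and $P_{i\backslash j}$ is no longer valid in this case.
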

\begin{proof}
Arbitrarily setting the precisions to zero, we get in
correspondence to the above derivation, \BEA
P_{i\backslash j}&=&P_{ii}=A_{ii},\\
P_{ij}\mu_{ij}&=&-A_{ij}\mu_{i\backslash j},\\
\label{eq_marginal_J}\mu_{i}&=&A_{ii}^{-1}(b_{i}-\sum_{k\in\textrm{N}(i)}A_{ki}\mu_{k\backslash
i}). \EEA Note that the inverse relation between $P_{ij}$ and
$P_{i\backslash j}$~(\ref{eq_prec_message}) is no longer valid in
this case.
Now, we rewrite the mean $\mu_{i\backslash j}$~(\ref{eq_mean})
without excluding the information from node $j$, \BE
\mu_{i\backslash
j}=A_{ii}^{-1}(b_{i}-\sum_{k\in\textrm{N}(i)}A_{ki}\mu_{k\backslash
i}). \EE Note that $\mu_{i\backslash j}=\mu_{i}$, hence the
inferred marginal mean $\mu_{i}$~(\ref{eq_marginal_J}) can be
rewritten as \BE\label{eq_J} \mu_{i}=A_{ii}^{-1}(b_{i}-\sum_{k\neq
i}A_{ki}\mu_{k}), \EE where the expression for all neighbors of
node $i$ is replaced by the redundant, yet identical, expression
$k\neq i$. This fixed-point iteration~(\ref{eq_J}) is identical to
the element-wise expression of the Jacobi
method\cite{BibDB:BookAxelsson}, concluding the proof.
\end{proof}

Now, the Gauss-Seidel (GS) method can be viewed as a `serial
scheduling' version of the Jacobi method; thus, based on
Proposition~\ref{prop_GaBPJ}, it can be derived also as an
instance of the serial (message-passing) GaBP solver. Next, since
successive over-relaxation (SOR) is nothing but a GS method
averaged over two consecutive iterations, SOR can be
obtained as a serial GaBP solver with `damping'
operation~\cite{Damping}.

\section{Application Example: Linear Detection}\label{sec_linear}
We examine the implementation of a decorrelator linear detector in
a CDMA system with spreading codes based upon Gold sequences of
length $N=7$. Two system setups are simulated,  corresponding to
$n=3$ and $n=4$ users.\comment{, resulting in the
cross-correlation matrices \BE
\mR_{n=3} = \frac{1}{7}\left(%
\begin{array}{rrr}
  7 & -1 & 3 \\
  -1 & 7 & -5 \\
  3 & -5 & 7 \\
\end{array} \right) \EE and
\BE \mR_{n=4} = \frac{1}{7}\left(%
\begin{array}{rrrr}
  7 & -1 & 3 & 3\\
  -1 & 7 & 3 & -1\\
  3 & 3 & 7 & -1\\
  3 & -1 & -1 & 7\\
\end{array} \right), \EE respectively.\footnote{These particular correlation settings were taken from the simulation setup of Yener
\etal~\cite{BibDB:YenerEtAl}.}}
The decorrelator detector, a member of the family of linear
detectors, solves a system of linear equations,
\mbox{$\mA\vx=\vb$}, where the matrix $\mA$ is equal to the
$n\times n$ correlation matrix $\mR$, and the observation vector
$\vb$ is identical to the $n$-length CDMA channel output vector
$\vy$.  Thus, the vector of decorrelator decisions is determined
by taking the signum (for binary signaling) of the vector
$\mA^{-1}\vb = \mR^{-1}\vy$. Note that $\mR_{n=3}$ and $\mR_{n=4}$
in this case are not strictly diagonally dominant, but their
spectral radii are less than unity, since
\mbox{$\rho(|\mI_{3}-\mR_{n=3}|)=0.9008<1$} and
\mbox{$\rho(|\mI_{4}-\mR_{n=4}|)=0.8747<1$}, respectively. In all
of the experiments, we assumed the (noisy) output sample was the
all-ones vector.

\begin{table}[t!] \centerline{
\resizebox{0.4\textwidth}{!}{\begin{tabular}{|c|r|r|}
  \hline
  \textbf{Algorithm}
  & Iterations $t$ ($\mR_{n=3}$) & Iterations $t$ ($\mR_{n=4}$) \\\hline\hline & & \\
  Jacobi & 111\comment{without dividing R_{3} with 7: 136} & 24\comment{50} \\\hline & & \\
  GS & 26 & 26\comment{32}\\\hline & & \\
  \textbf{Parallel GaBP} & \textbf{23} & \textbf{24}\\\hline & & \\
  Optimal SOR & 17 & 14\comment{20} \\\hline & & \\
  \textbf{Serial GaBP} & \textbf{16} & \textbf{13}\\
  \hline & &\\Jacobi+Steffensen & 59\comment{51} & $-$ \\\hline & & \\
  \textbf{Parallel GaBP+Steffensen} & \textbf{13} & \textbf{13}\\\hline & & \\
  \textbf{Serial GaBP+Steffensen} & \textbf{9} & \textbf{7} \\
  \hline
\end{tabular}
}}\vspace{0.1cm}\caption{Convergence rate.}\label{tab_1}
\end{table}

Table~\ref{tab_1} compares the proposed GaBP solver with standard
iterative solution methods~\cite{BibDB:BookAxelsson}, previously
employed for CDMA multiuser detection (MUD). Specifically, MUD
algorithms based on the algorithms of Jacobi, GS and (optimally
weighted) SOR were
investigated~\cite{BibDB:YenerEtAl,grant99iterative,BibDB:TanRasmussen}.
Table~\ref{tab_1} lists the convergence rates for the two Gold
code-based CDMA settings. Convergence is identified and declared
when the differences in all the iterated values are less than
$10^{-6}$. We see that, in comparison with the previously proposed
detectors based upon the Jacobi and GS algorithms, the serial
(asynchronous) message-passing GaBP detector converges more
rapidly for both $n=3$ and $n=4$ and achieves the best overall
convergence rate, surpassing even the optimal SOR-based detector.
Further speed-up of the GaBP solver can be achieved by adopting
known acceleration techniques from linear algebra.
\comment{Consider a sequence $\{x_{n}\}$ (\eg, obtained by using
GaBP iterations) linearly converging to the limit $\hat{x}$, and
$x_n \ne \hat{x}$ for $n \ge 0$. According to Aitken's method, if
there exists a real number $a$ such that $|a|<1 $ and
\mbox{$\lim_{n \rightarrow \infty}(x_n-\hat{x})/(x_{n-1} -
\hat{x}) = a$}, then the sequence $\{ y_n\}$ defined by
\[ y_n = x_n - \frac{(x_{n+1} -x_n)^2}{x_{n+2} - 2x_{n+1} + x_n} \]
converges to $\hat{x}$ faster than $\{ x_n \}$ in the sense that
\mbox{$\lim_{n \rightarrow \infty} |(\hat{x} - y_n)/(\hat{x} -
x_n)| = 0$}. Aitken's method can be viewed as a generalization of
over-relaxation, since one uses values from three, rather than
two, consecutive iteration rounds. This method can be easily
implemented in GaBP as every node computes values based only on
its own history.

Steffensen's iterations incorporate Aitken's method. Starting with
$x_{n}$, two iterations are run to get $x_{n+1}$ and $x_{n+2}$.
Next, Aitken's method is used to compute $y_{n}$, this value
replaces the original $x_{n}$, and GaBP is executed again to get a
new value of $x_{n+1}$. This process is repeated iteratively until
convergence.} Table~\ref{tab_1} demonstrates the speed-up of the
GaBP solver obtained by using such an acceleration method, termed
Steffensen's iterations~\cite{BibDB:BookHenrici}, in comparison
with the accelerated Jacobi algorithm (diverged for the 4 users
setup). We remark that this is the first time such an acceleration
method is examined within the framework of message-passing
algorithms and that the region of convergence of the accelerated
GaBP solver remains unchanged.
\begin{figure}[t!]
\begin{center}
\includegraphics[width=0.4\textwidth]{./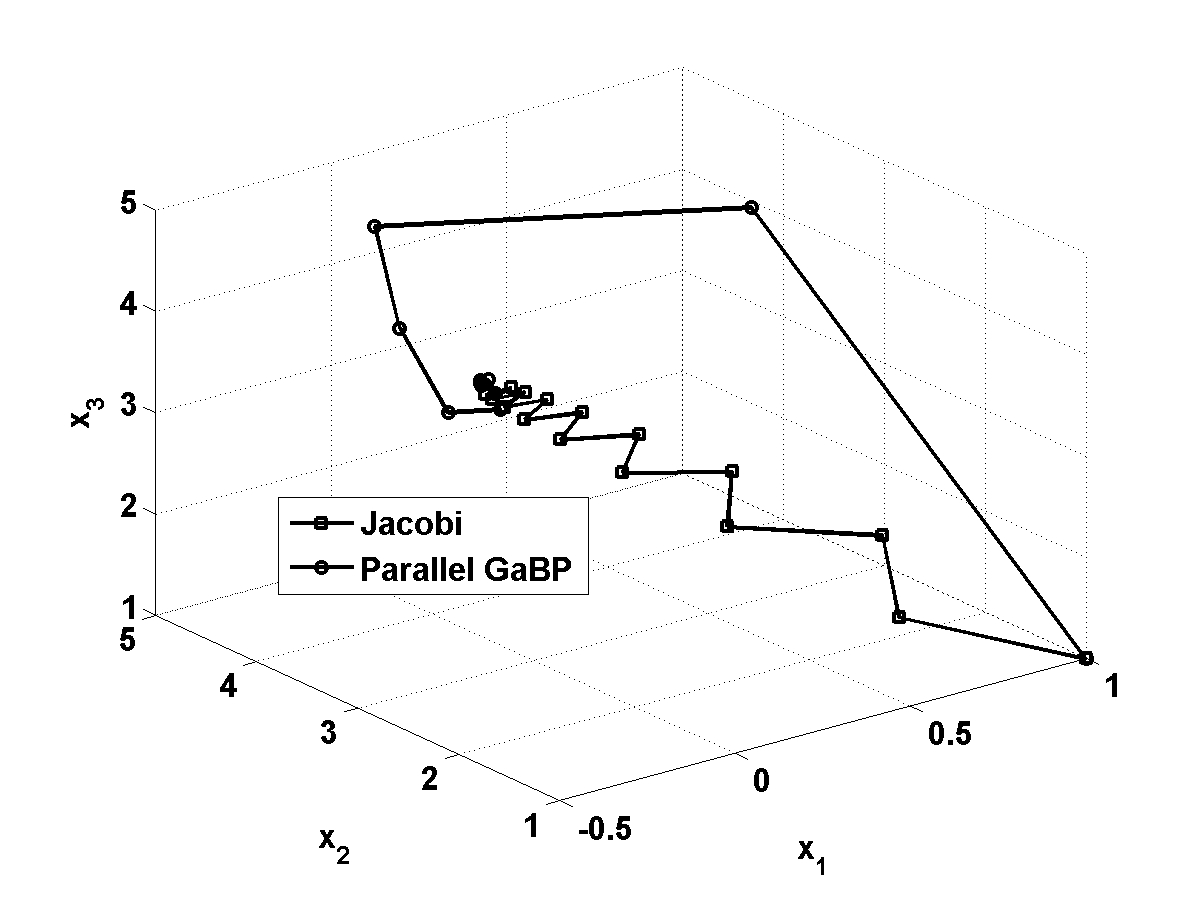}\end{center}
\vspace{-0.5cm}\caption{Convergence visualization.}\label{fig_1}
\end{figure}

The convergence contours for the Jacobi and parallel (synchronous)
GaBP solvers for the case of 3 users are plotted in the space of
$\{x_{1},x_{2},x_{3}\}$ in Fig.~\ref{fig_1}. As expected, the
Jacobi algorithm converges in zigzags directly towards the fixed
point\comment{(this behavior is well-explained in Bertsekas and
Tsitsiklis~\cite{BibDB:BookBertsekasTsitsiklis})}. It is
interesting to note that the GaBP solver's convergence is in a
spiral shape, hinting that despite the overall convergence
improvement, performance improvement is not guaranteed in
successive iteration rounds. Further results and elaborate
discussion on the application of GaBP specifically to linear MUD
may be found in recent contributions~\cite{Allerton,ISIT2}.
\vspace{-0.2cm}
\bibliographystyle{IEEEtran}   
\bibliography{IEEEabrv,isit_Solver_final}       

\end{document}